\documentclass[sort, compress]{elsarticle}
\usepackage[margin=1in]{geometry}
\usepackage{graphicx}
\usepackage[english]{babel}
\usepackage{amsthm}
\usepackage{amsmath}
\usepackage{amssymb}
\usepackage{dsfont}
\usepackage{xcolor}
\usepackage{color}
\usepackage{listings}
\usepackage{hyperref}
\usepackage{enumitem}
\usepackage[normalem]{ulem}
\usepackage{caption, subcaption}

\usepackage{algorithm}
\usepackage{algpseudocode}
\usepackage{tikz}
\usetikzlibrary{calc}

\NewDocumentCommand{\avm}{o o}{%
  \IfNoValueTF{#1}{%
    \widehat{\operatorname{D}}_{n,m}%
  }{%
    \IfNoValueTF{#2}{\widehat{\operatorname{D}}_{#1, #1}}{\widehat{\operatorname{D}}_{#1,#2}}%
  }%
}

\newcommand{\ind}{\mathds 1}
\newcommand{\Cov}{\operatorname{Cov}}

\newtheorem{definition}{Definition}
\newtheorem{remark}{Remark}
\newtheorem{theorem}{Theorem}
\newtheorem{proposition}[theorem]{Proposition}
\newtheorem{lemma}[theorem]{Lemma}
\numberwithin{equation}{section}

\begin{document}

\begin{frontmatter}

    \title{Wasserstein-Based Test for Empirical Measure\\ Convergence of Dependent Sequences}
    \author[GATE,UMD]{Alexander Yordanov\corref{cor1}} \ead{alexander.yordanov@gate-ai.eu}
    \author[GATE]{Peter Hristov} \ead{petar.hristov@gate-ai.eu}
    \affiliation[GATE]{organization={GATE Institute},
                addressline={5 ``James Boucher''  Blvd.},
                city={Sofia},
                postcode={1164},
                country={Bulgaria}}
    \affiliation[UMD]{organization={Department of Mathematics, University of Maryland},
                addressline={4176 Campus Dr.},
                city={College Park},
                postcode={20742},
                state={MD},
                country={USA}}
    \cortext[cor1]{Corresponding author}
    \date{March 2026}

    \begin{abstract}
        We develop Wasserstein-based hypothesis tests for empirical-measure convergence in stationary dependent sequences. For a known candidate invariant measure, $\mu$, we study the statistic $T_n=\sqrt{n}\,W_1(\hat\mu_n,\mu)$ and establish asymptotic level-$\alpha$ validity under the null, together with consistency under fixed alternatives. When the invariant measure is unknown, we derive the asymptotic law of the pairwise statistic $\sqrt{n}\,W_1(\hat\mu_n^{(i)},\hat\mu_n^{(j)})$ for independent trajectories and obtain a corresponding pairwise test, including Bonferroni control for multiple comparisons. To make this estimation feasible when the long-run covariance is unavailable in closed form, we introduce a finite-grid plug-in estimator and show that Gaussian critical values based on the estimated covariance consistently recover the corresponding oracle fixed-grid estimation. Simulation experiments in both linear and nonlinear dynamical settings illustrate the oracle and plug-in regimes, along with the resulting coverage probability and power.
    \end{abstract}
\end{frontmatter}

\section{Introduction}
Many validation problems in stochastic dynamics are naturally long-run questions: does a trajectory reach the correct statistical steady state, and do time-averaged observables stabilize in a way consistent with the target invariant regime? This problem appears both in time-domain diagnostics and in frequency-domain validation, where one compares empirical distributions induced by dependent time series rather than i.i.d.\ samples. Formally, if $(X_t)_{t\ge 1}$ is stationary ergodic with invariant law $\mu$, then the relevant object to study is the empirical measure $\hat\mu_n$ and its convergence to $\mu$.

For this purpose, a Wasserstein metric offers structural advantages over Kolmogorov-Smirnov (KS) type metrics. In one dimension,
\begin{equation}
    W_1(\hat\mu_n,\mu)=\int_{\mathbb R}|\hat{F}_n(t)-F(t)|\,dt,
\end{equation}
where $F(t)$ and $\hat{F}_n(t)$ are the cumulative distribution functions induced by the measures $\mu$ and $\hat{\mu}_n$ respectively. From this definition, we see that the statistic aggregates deviations across the entire state space and quantifies mass displacement in physical units. By contrast, KS uses a supremum norm and depends only on the single largest vertical gap between distribution functions, which can underweight broad but moderate discrepancies. Thus, when convergence quality is tied to geometry of the state space or to cumulative spectral/steady-state behavior, $W_1$ is often the more informative metric. 

This paper builds on the optimal transport and statistical Wasserstein literature \cite{Villani:2008,Panaretos:2019}, while focusing on the dependent-sequence regime that is central to dynamical systems validation. For i.i.d.\ sampling, sharp non-asymptotic control is available in several settings \cite{Fournier:2022}; for stationary dependent sequences, asymptotic results such as Proposition 3.1 in \cite{Dedecker:2017} provide the key Gaussian-process limit theory. Our goal is to turn this asymptotic theory into practical hypothesis tests for empirical-measure convergence, including a data-driven calibration scheme when the long-run covariance is unknown.

Our first contribution is a one-sample test for a known candidate invariant measure $\mu$, based on $T_n=\sqrt n\,W_1(\hat\mu_n,\mu)$, with asymptotic coverage probability and consistency results under fixed alternatives. Our second contribution addresses the practically common case where $\mu$ is unknown: we analyze pairwise distances $\sqrt n\,W_1(\hat\mu_n^{(i)},\hat\mu_n^{(j)})$ across independent trajectories, derive their asymptotic law under the null, and formulate a pairwise testing procedure with Bonferroni correction for multiple comparisons. Our third contribution makes this pairwise procedure implementable when the long-run covariance is not known analytically: we introduce a finite-grid flat-top plug-in estimator of the covariance, prove fixed-grid consistency, and show that critical values built from the estimated covariance consistently recover the corresponding oracle (that is, known theoretical covariance) fixed-grid calibration.

The remainder of the paper is organized as follows. Section 2.1 develops the one-sample Wasserstein test when the invariant measure is known. Section 2.2 presents the pairwise test and its asymptotic guarantees when the invariant measure is unknown. Section 2.3 establishes fixed-grid consistency of the plug-in estimator for the long-run covariance and shows that the resulting Gaussian critical values consistently estimate the corresponding oracle fixed-grid critical values. Section 3 reports simulation studies in linear and nonlinear systems (including double pendulum experiments) illustrating coverage probability and power. Section 4 concludes the exposition, and the appendix provides implementation details for the double pendulum simulation.

\section{Main Results}
\subsection{Testing Empirical Measure Convergence to a Known Invariant Measure}

For the process $(X_t)_{t\geq 1}^n$, denote the $i$-th empirically observed path by $X_t^{(i)},\quad i=1, \dots, M$ and define its empirical measure 
\begin{equation}
    \hat{\mu}_n^{(i)} (\cdot) = \frac{1}{n} \sum_{t=1}^{n} \delta_{X_t^{(i)}} (\cdot ) 
\end{equation}
where $\delta_{X_t^{(i)}}$ is a Dirac delta measure. 

If the system is ergodic with a unique invariant measure $\mu$, then for every $i$,
$\hat{\mu}_n^{(i)}\to\mu$ almost surely as $n\to\infty$. Under finite first-moment assumptions,
\begin{equation}
    W_1 (\hat\mu_n^{(i)}, \mu) \overset{a.s.}{\to} 0
\end{equation}
Moreover, Proposition 3.1 in \cite{Dedecker:2017} yields
\begin{equation}
\sqrt{n} \, W_1 (\hat\mu_n^{(i)}, \mu) \overset{D}{\to} \; \int_{\mathbb{R}} |G (t) | dt =: \mathbb{G},
\end{equation}
where $G$ is a centered Gaussian process with covariance function defined as follows: 
\begin{multline}
    \operatorname{Cov} \left(\int f (t) G(t) dt, \int g(t) G(t) dt \right)\\ =
    \sum_{k\in \mathbb Z} \mathbb E \left( \iint f(t)g(s) (\ind \{X_0\leq t\} - F(t))(\ind \{X_k\leq s\} - F(s))dt \, ds  \right)
\label{eq:cov-G}
\end{multline}
where $f, g \in \mathbb{L}_\infty (\mu)$ are ``test'' functions.

This motivates the following one-sample test against a known invariant measure.

\begin{definition}[Wasserstein test for one invariant measure]\label{def:one-sample-test}
Let $(X_t)_{t=1}^n$ be a stationary ergodic trajectory with empirical measure
$\hat\mu_n=\frac1n\sum_{t=1}^n\delta_{X_t}$, and let $\mu$ be a candidate invariant measure. Consider the hypotheses:
\begin{equation}
H_0: \hat \mu_n \to \mu \qquad vs. \qquad
H_A: \hat\mu_n \to \nu \neq \mu.
\end{equation}
Define
\begin{equation}
T_n := \sqrt{n} W_1 (\hat \mu_n, \mu).
\end{equation}
For $\alpha\in(0,1)$, let $q_{1-\alpha}$ denote the $(1-\alpha)$-quantile of $\mathbb G$.
Reject $H_0$ whenever
\begin{equation}
T_n>q_{1-\alpha}.
\end{equation}
\end{definition}

\begin{proposition}[Asymptotic coverage under $H_0$]\label{prop:one-sample-coverage}
Under the same assumptions as those in Proposition 3.1 in \cite{Dedecker:2017} and assuming that
$\mathbb P(\mathbb G=q_{1-\alpha})=0$. Then the test in Definition~\ref{def:one-sample-test} has asymptotic level $\alpha$, i.e.,
\begin{equation}
\lim_{n\to\infty}\mathbb P_{H_0}(T_n>q_{1-\alpha})=\alpha,
\end{equation}
equivalently,
\begin{equation}
\lim_{n\to\infty}\mathbb P_{H_0}(T_n\le q_{1-\alpha})=1-\alpha.
\end{equation}
\end{proposition}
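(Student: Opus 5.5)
The plan is to derive the statement directly from the distributional limit quoted above from Proposition 3.1 in \cite{Dedecker:2017}, combined with the Portmanteau theorem. Under $H_0$ the trajectory is stationary ergodic with invariant law $\mu$, and the assumptions of that proposition hold. It therefore gives $T_n=\sqrt n\,W_1(\hat\mu_n,\mu)\overset{D}{\to}\mathbb G=\int_{\mathbb R}|G(t)|\,dt$. Write $F_{\mathbb G}(x)=\mathbb P(\mathbb G\le x)$ for the distribution function of the limit. Convergence in distribution means $\mathbb P(T_n\le x)\to F_{\mathbb G}(x)$ at every continuity point $x$ of $F_{\mathbb G}$. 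The point $x$ is a continuity point exactly when $\mathbb P(\mathbb G=x)=0$.

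The key steps, in order, are as follows.

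First, I note that the hypothesis $\mathbb P(\mathbb G=q_{1-\alpha})=0$ makes $q_{1-\alpha}$ a continuity point of $F_{\mathbb G}$. Equivalently, the boundary of the set $(-\infty,q_{1-\alpha}]$ is a $\mathbb G$-null set. Portmanteau then gives $\mathbb P_{H_0}(T_n\le q_{1-\alpha})\to F_{\mathbb G}(q_{1-\alpha})$.

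Second, I identify $F_{\mathbb G}(q_{1-\alpha})=1-\alpha$. By definition of the $(1-\alpha)$-quantile, $q_{1-\alpha}=\inf\{x: F_{\mathbb G}(x)\ge 1-\alpha\}$. Right-continuity gives $F_{\mathbb G}(q_{1-\alpha})\ge 1-\alpha$. For the reverse inequality, the atomlessness at $q_{1-\alpha}$ gives
$$F_{\mathbb G}(q_{1-\alpha})=F_{\mathbb G}(q_{1-\alpha}^-)=\lim_{x\uparrow q_{1-\alpha}}F_{\mathbb G}(x)\le 1-\alpha,$$
since $F_{\mathbb G}(x)<1-\alpha$ for every $x<q_{1-\alpha}$. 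Hence equality holds.

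Third, I pass to complements: $\mathbb P_{H_0}(T_n>q_{1-\alpha})=1-\mathbb P_{H_0}(T_n\le q_{1-\alpha})\to\alpha$. This yields both displayed forms of the statement.

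I expect the only real subtlety to be the second step, namely making sure the quantile actually achieves the value $1-\alpha$ rather than merely bounding it. The standing atom-free assumption at $q_{1-\alpha}$ is exactly what closes this gap, so no extra continuity of the law of $\mathbb G$ is needed. If one wanted to remove that assumption, the natural route would be to argue that $\mathbb G$ has a continuous distribution on $(0,\infty)$. This would follow because $\mathbb G$ is a norm of a nondegenerate Gaussian element in $L^1$, so its law is continuous except possibly at $0$, and $\mathbb P(\mathbb G=0)=0$ when the covariance in \eqref{eq:cov-G} is nonzero. That extension is not required here, however.
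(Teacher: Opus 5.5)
Your proposal is correct and matches the paper's proof. Both take $T_n\overset{D}{\to}\mathbb G$ from Proposition 3.1 in \cite{Dedecker:2017} and use the no-atom condition at $q_{1-\alpha}$ to pass to the limit and obtain $\alpha$. Your second step spells out the identity $F_{\mathbb G}(q_{1-\alpha})=1-\alpha$, which the paper compresses into ``by the definition of $q_{1-\alpha}$ and the no-atom condition.''
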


\begin{proof}
Under $H_0$, Proposition 3.1 in \cite{Dedecker:2017} gives
$T_n \to \mathbb G=\int |G(t)|dt$. By the definition of $q_{1-\alpha}$ and the no-atom condition at that quantile,
\begin{equation}
\lim_{n\to\infty}\mathbb P_{H_0}(T_n>q_{1-\alpha})
=\mathbb P(\mathbb G>q_{1-\alpha})=\alpha.
\end{equation}
The acceptance probability therefore converges to $1-\alpha$.
\end{proof}

\begin{proposition}[Power under fixed alternatives]\label{prop:one-sample-power}
Suppose under $H_A$ the empirical measure satisfies $\hat\mu_n\to\nu$ almost surely, with
$W_1(\nu,\mu)>0$. Then the test in Definition~\ref{def:one-sample-test} is consistent:
\begin{equation}
\mathbb P_{H_A}(T_n>q_{1-\alpha})\to 1.
\end{equation}
\end{proposition}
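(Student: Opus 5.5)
The plan is a triangle-inequality argument: show that $W_1(\hat\mu_n,\mu)$ stays bounded away from zero under $H_A$, so the $\sqrt n$ factor drives $T_n$ to infinity. The critical value $q_{1-\alpha}$ is a fixed finite number, because $\mathbb G=\int|G(t)|\,dt$ is an almost surely finite random variable under the assumptions of Proposition 3.1 in \cite{Dedecker:2017}. It therefore suffices to show $T_n\to\infty$ almost surely (or in probability) under $H_A$.

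\textbf{Step 1: convergence of $W_1(\hat\mu_n,\nu)$.} I will interpret the hypothesis $\hat\mu_n\to\nu$ as convergence in $W_1$, i.e.\ $W_1(\hat\mu_n,\nu)\to 0$ almost surely. This is the setting the paper uses, with finite first moments. If only weak convergence is assumed, I will add convergence of first absolute moments. That holds, for instance, by the ergodic theorem applied to $|X_t|$ when $\nu$ has a finite first moment, and it upgrades weak convergence to $W_1$ convergence.

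\textbf{Step 2: lower bound via the triangle inequality.} Since $W_1$ is a metric,
\begin{equation*}
W_1(\hat\mu_n,\mu)\ \ge\ W_1(\nu,\mu)-W_1(\hat\mu_n,\nu).
\end{equation*}
The right-hand side tends to $\delta:=W_1(\nu,\mu)>0$ almost surely. Hence, almost surely, $W_1(\hat\mu_n,\mu)\ge \delta/2$ for all $n$ large enough.

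\textbf{Step 3: conclusion.} By Step 2, $T_n=\sqrt n\,W_1(\hat\mu_n,\mu)\ge \sqrt n\,\delta/2\to\infty$ eventually, almost surely. So $\ind\{T_n>q_{1-\alpha}\}\to 1$ almost surely. Dominated convergence (or Fatou's lemma) then gives $\mathbb P_{H_A}(T_n>q_{1-\alpha})\to 1$.

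The only delicate point is Step 1: the mode of convergence in ``$\hat\mu_n\to\nu$'' must be strong enough to control $W_1$ rather than only weak convergence. I will state this explicitly as part of the meaning of the alternative, consistent with the earlier discussion of $W_1(\hat\mu_n^{(i)},\mu)\to 0$ under finite first moments. A second point to note is that $q_{1-\alpha}<\infty$; this follows because $G$ has integrable standard deviation under the assumptions of Proposition 3.1 in \cite{Dedecker:2017}, so $\mathbb E\,\mathbb G<\infty$.
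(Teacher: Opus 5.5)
Your argument is correct and is essentially the paper's: the paper invokes ``continuity of $W_1$'' to get $W_1(\hat\mu_n,\mu)\to W_1(\nu,\mu)>0$ a.s.\ and concludes that $T_n\to\infty$ exceeds the fixed finite $q_{1-\alpha}$, which is exactly your triangle-inequality bound made explicit. Your care about the mode of convergence is warranted, since weak convergence alone does not make $W_1$ continuous; but for the one-sided bound you need, weak convergence already suffices, because Fatou gives $\liminf_n\int|\hat F_n-F|\,dt\ge\int|F_\nu-F|\,dt=W_1(\nu,\mu)$ (lower semicontinuity of $W_1$).
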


\begin{proof}
By continuity of $W_1$ with respect to weak convergence (under finite first moments),
\begin{equation}
W_1(\hat\mu_n,\mu)\to W_1(\nu,\mu)=:c>0\qquad\text{a.s.}
\end{equation}
Hence $T_n=\sqrt n\,W_1(\hat\mu_n,\mu)\to\infty$ almost surely as $n\to\infty$. Since $q_{1-\alpha}<\infty$ is fixed,
$\mathbb P_{H_A}(T_n>q_{1-\alpha})\to 1$.
\end{proof}

\subsection{Pairwise Convergence Testing with Two Empirical Measures}

In many applications, the invariant measure is unknown analytically, even though multiple trajectories can be simulated or observed from different initial conditions. In that setting, it is natural to compare empirical measures across trajectory pairs. We therefore study the asymptotic distribution of the pairwise Wasserstein distance and then build a test from that limit.

\begin{theorem}[Asymptotic distribution of the two-sample Wasserstein distance]\label{thm:pairwise-w1-limit}
Let $(X_k^{(i)})_{k\in\mathbb{Z}}$ and $(X_k^{(j)})_{k\in\mathbb{Z}}$ be independent stationary ergodic real-valued sequences with common marginal distribution function $F$, and define
\begin{equation}
\hat\mu_n^{(i)}=\frac1n\sum_{k=1}^n \delta_{X_k^{(i)}},
\qquad
\hat\mu_n^{(j)}=\frac1n\sum_{k=1}^n \delta_{X_k^{(j)}},
\qquad
W_{ij,n}:=W_1(\hat\mu_n^{(i)},\hat\mu_n^{(j)}).
\end{equation}
Assume the joint convergence
\begin{equation}
\left(\sqrt{n}\bigl(F_n^{(i)}-F\bigr),\sqrt{n}\bigl(F_n^{(j)}-F\bigr)\right)
\to
\bigl(G^{(i)},G^{(j)}\bigr)
\quad\text{in }L^1(\mathbb{R})\times L^1(\mathbb{R}),
\end{equation}
where $\bigl(G^{(i)},G^{(j)}\bigr)$ is centered Gaussian, each marginal process has the same covariance structure as in \eqref{eq:cov-G}, and $G^{(i)}$ and $G^{(j)}$ are independent copies.
Then
\begin{equation}
\sqrt{n}\,W_{ij,n}
\to
\int_{\mathbb{R}} \bigl|G^{(i)}(t)-G^{(j)}(t)\bigr|\,dt.
\end{equation}
\end{theorem}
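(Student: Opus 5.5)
The plan is to write the pairwise statistic as a continuous functional of the pair of centered empirical processes and then apply the continuous mapping theorem. In one dimension the Wasserstein distance between two measures with finite first moments is the $L^1$ distance between their distribution functions. So, with $F_n^{(i)}$ and $F_n^{(j)}$ the empirical distribution functions of $\hat\mu_n^{(i)}$ and $\hat\mu_n^{(j)}$,
\begin{equation*}
\sqrt{n}\,W_{ij,n}=\int_{\mathbb R}\sqrt n\,\bigl|F_n^{(i)}(t)-F_n^{(j)}(t)\bigr|\,dt
=\int_{\mathbb R}\Bigl|\sqrt n\bigl(F_n^{(i)}-F\bigr)(t)-\sqrt n\bigl(F_n^{(j)}-F\bigr)(t)\Bigr|\,dt .
\end{equation*}
Adding and subtracting the common $F$ is legitimate because both samples share the marginal $F$. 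We also need each difference to lie in $L^1(\mathbb R)$; this is implicit in the assumed convergence in $L^1(\mathbb R)\times L^1(\mathbb R)$, and in any case holds when $F$ has a finite first moment.

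Next, define $\Phi:L^1(\mathbb R)\times L^1(\mathbb R)\to\mathbb R$ by $\Phi(u,v)=\|u-v\|_{L^1}$. By the triangle inequality, $|\Phi(u,v)-\Phi(u',v')|\le \|u-u'\|_{L^1}+\|v-v'\|_{L^1}$, so $\Phi$ is Lipschitz and in particular continuous on the product space. The display above says exactly that
\begin{equation*}
\sqrt n\,W_{ij,n}=\Phi\bigl(\sqrt n(F_n^{(i)}-F),\sqrt n(F_n^{(j)}-F)\bigr).
\end{equation*}
The hypothesis gives joint convergence in distribution in the separable Banach space $L^1\times L^1$. The continuous mapping theorem then yields
\begin{equation*}
\sqrt n\,W_{ij,n}\ \overset{D}{\to}\ \Phi\bigl(G^{(i)},G^{(j)}\bigr)=\int_{\mathbb R}\bigl|G^{(i)}(t)-G^{(j)}(t)\bigr|\,dt .
\end{equation*}

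Finally, I would identify the limit law. Since $G^{(i)}$ and $G^{(j)}$ are independent centered Gaussian processes with the covariance \eqref{eq:cov-G}, their difference $G^{(i)}-G^{(j)}$ is centered Gaussian with twice that covariance, so it has the law of $\sqrt2\,G$. The limit therefore equals $\sqrt2\,\mathbb G$ in distribution. This is not needed for the theorem, but it shows how the one-sample quantiles can be reused for the pairwise test.

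The main obstacle is justifying the joint convergence itself, because the theorem assumes it and only the marginal convergences come from Proposition 3.1 in \cite{Dedecker:2017}. My plan is to argue from independence of the two trajectories: each marginal sequence is tight in $L^1$, so the pair is tight in the product; and since the sequences are independent, the joint law of the pair is the product of the marginal laws, and a product of weakly convergent laws converges weakly to the product of the limits. The remaining points are minor: measurability of $\sqrt n(F_n-F)$ as an $L^1$-valued random element, and the existence of $G$ as an $L^1$-valued Gaussian element. Both are supplied by the $L^1$ framework of \cite{Dedecker:2017}.
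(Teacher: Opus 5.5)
Your proposal is correct and follows the paper's proof: you write $\sqrt n\,W_{ij,n}$ through the distribution-function representation of $W_1$ as $\Phi\bigl(\sqrt n(F_n^{(i)}-F),\sqrt n(F_n^{(j)}-F)\bigr)$ with $\Phi(u,v)=\|u-v\|_{L^1}$, and then apply the continuous mapping theorem to the assumed joint $L^1\times L^1$ convergence. Your extra remarks go slightly beyond what the paper states and are also correct: $\Phi$ is Lipschitz, the limit has the law of $\sqrt2\,\mathbb G$, and joint convergence follows from the marginal convergences by independence, so the hypothesis already follows from Proposition 3.1 of \cite{Dedecker:2017} applied to each trajectory.
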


\begin{proof}
For probability measures on $\mathbb{R}$, the $1$-Wasserstein distance admits the representation
\begin{equation}
W_1(\nu_1,\nu_2)=\int_{\mathbb{R}} \bigl|F_{\nu_1}(t)-F_{\nu_2}(t)\bigr|\,dt,
\end{equation}
where $F_{\nu_1}$ and $F_{\nu_2}$ are the distribution functions of $\nu_1$ and $\nu_2$. Applying this identity with $\nu_1=\hat\mu_n^{(i)}$ and $\nu_2=\hat\mu_n^{(j)}$, we obtain
\begin{equation}
W_{ij,n}
=
\int_{\mathbb{R}} \bigl|F_n^{(i)}(t)-F_n^{(j)}(t)\bigr|\,dt.
\end{equation}
Multiplying by $\sqrt{n}$ yields
\begin{equation}
\sqrt{n}\,W_{ij,n}
=
\int_{\mathbb{R}}
\left|
\sqrt{n}\bigl(F_n^{(i)}(t)-F(t)\bigr)
-
\sqrt{n}\bigl(F_n^{(j)}(t)-F(t)\bigr)
\right|dt.
\end{equation}
In terms of the processes
\begin{equation}
G_n^{(i)}=\sqrt{n}(F_n^{(i)}-F),
\qquad
G_n^{(j)}=\sqrt{n}(F_n^{(j)}-F),
\end{equation}
this becomes
\begin{equation}
\sqrt{n}\,W_{ij,n}
=
\|G_n^{(i)}-G_n^{(j)}\|_{L^1}.
\end{equation}
Now define the continuous mapping
\begin{equation}
\Phi:L^1(\mathbb{R})\times L^1(\mathbb{R})\to\mathbb{R},
\qquad
\Phi(f,g):=\|f-g\|_{L^1}.
\end{equation}
Since by assumption
\begin{equation}
\bigl(G_n^{(i)},G_n^{(j)}\bigr)
\to
\bigl(G^{(i)},G^{(j)}\bigr)
\quad\text{in }L^1(\mathbb{R})\times L^1(\mathbb{R}),
\end{equation}
it follows from the continuous mapping theorem that,
\begin{equation}
\sqrt{n}\,W_{ij,n}
=
\Phi\bigl(G_n^{(i)},G_n^{(j)}\bigr)
\to
\Phi\bigl(G^{(i)},G^{(j)}\bigr)
=
\|G^{(i)}-G^{(j)}\|_{L^1}.
\end{equation}
Equivalently,
\begin{equation}
\sqrt{n}\,W_{ij,n}
\to
\int_{\mathbb{R}} \bigl|G^{(i)}(t)-G^{(j)}(t)\bigr|\,dt,
\end{equation}
which completes the proof.
\end{proof}

\begin{definition}[Asymptotic pairwise Wasserstein test]\label{def:pairwise-test}
For one observed pair of trajectories, define
\begin{equation}
T_{ij,n}:=\sqrt{n}\,W_1\bigl(\hat\mu_n^{(i)},\hat\mu_n^{(j)}\bigr),
\qquad
H_0:\mu^{(i)}=\mu^{(j)}
\quad\text{vs.}\quad
H_A:\mu^{(i)}\neq\mu^{(j)}.
\end{equation}
Let
\begin{equation}
Z_{ij}:=\int_{\mathbb R}\bigl|G^{(i)}(t)-G^{(j)}(t)\bigr|\,dt,
\end{equation}
and let $q_{1-\alpha}^{(ij)}$ be the $(1-\alpha)$-quantile of $Z_{ij}$. Reject $H_0$ whenever
\begin{equation}
T_{ij,n}>q_{1-\alpha}^{(ij)}.
\end{equation}
\end{definition}

\begin{proposition}[Asymptotic coverage under $H_0$]\label{prop:pairwise-coverage}
Assume the conditions of Theorem~\ref{thm:pairwise-w1-limit} and
$\mathbb P\bigl(Z_{ij}=q_{1-\alpha}^{(ij)}\bigr)=0$. Then the test in Definition~\ref{def:pairwise-test} has asymptotic level $\alpha$:
\begin{equation}
\lim_{n\to\infty}\mathbb P_{H_0}\!\left(T_{ij,n}>q_{1-\alpha}^{(ij)}\right)=\alpha.
\end{equation}
\end{proposition}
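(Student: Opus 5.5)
The plan is to mirror the proof of Proposition~\ref{prop:one-sample-coverage}, with Theorem~\ref{thm:pairwise-w1-limit} taking the place of Proposition 3.1 in \cite{Dedecker:2017}. Under $H_0$ the two trajectories share the common marginal $F$, and they are independent. Together with the assumed joint $L^1$ convergence, the hypotheses of Theorem~\ref{thm:pairwise-w1-limit} are then exactly in force. That theorem gives
\begin{equation*}
T_{ij,n}=\sqrt n\,W_1\bigl(\hat\mu_n^{(i)},\hat\mu_n^{(j)}\bigr)\overset{D}{\to} Z_{ij}=\int_{\mathbb R}\bigl|G^{(i)}(t)-G^{(j)}(t)\bigr|\,dt.
\end{equation*}
Convergence in law to the stated limit is what the continuous mapping argument there produces: the assumed convergence of $(G_n^{(i)},G_n^{(j)})$ is in distribution in $L^1(\mathbb R)\times L^1(\mathbb R)$. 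In the write-up I will make the ``in distribution'' reading explicit.

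Next I apply the Portmanteau theorem to the event $\{T_{ij,n}>q^{(ij)}_{1-\alpha}\}$, that is, to the set $A=(q^{(ij)}_{1-\alpha},\infty)$. The boundary of $A$ is the single point $\{q^{(ij)}_{1-\alpha}\}$. By the no-atom hypothesis $\mathbb P(Z_{ij}=q^{(ij)}_{1-\alpha})=0$, so $A$ is a $Z_{ij}$-continuity set. It follows that $\mathbb P_{H_0}(T_{ij,n}>q^{(ij)}_{1-\alpha})\to\mathbb P(Z_{ij}>q^{(ij)}_{1-\alpha})$. Equivalently, the distribution function of $T_{ij,n}$ converges at the continuity point $q^{(ij)}_{1-\alpha}$ of the law of $Z_{ij}$.

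Finally I identify the limit with $\alpha$. By definition of the $(1-\alpha)$-quantile, $\mathbb P(Z_{ij}\le q^{(ij)}_{1-\alpha})\ge 1-\alpha$. Continuity of the distribution function at $q^{(ij)}_{1-\alpha}$ (no atom) should upgrade this to equality, giving $\mathbb P(Z_{ij}>q^{(ij)}_{1-\alpha})=\alpha$.

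The only delicate point is this last step. With the generalized-inverse definition $q=\inf\{x:\mathbb P(Z_{ij}\le x)\ge 1-\alpha\}$, equality follows from right-continuity of the distribution function combined with the absence of an atom at $q$. This argument implicitly requires the distribution function to be continuous there, so that it cannot jump past $1-\alpha$. That is exactly what the no-atom condition provides, just as the paper assumed in Proposition~\ref{prop:one-sample-coverage}.

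I do not expect any further obstacle. Nondegeneracy of $G^{(i)}-G^{(j)}$ is not needed, because the no-atom assumption is imposed directly.
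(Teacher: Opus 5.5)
Your proposal is correct and follows the paper's proof: you invoke Theorem~\ref{thm:pairwise-w1-limit} to get $T_{ij,n}\to Z_{ij}$ in distribution under $H_0$, then pass to the limit at the continuity point $q^{(ij)}_{1-\alpha}$ using the no-atom condition. The only difference is that you spell out the Portmanteau step and the identity $\mathbb P\bigl(Z_{ij}>q^{(ij)}_{1-\alpha}\bigr)=\alpha$ via the generalized inverse, both of which the paper leaves implicit.
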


\begin{proof}
By Theorem~\ref{thm:pairwise-w1-limit}, under $H_0$ we have
$T_{ij,n}\to Z_{ij}$. Therefore,
\begin{equation}
\lim_{n\to\infty}\mathbb P_{H_0}\!\left(T_{ij,n}>q_{1-\alpha}^{(ij)}\right)
=\mathbb P\!\left(Z_{ij}>q_{1-\alpha}^{(ij)}\right)=\alpha,
\end{equation}
using the no-atom condition at the quantile.
\end{proof}

\begin{proposition}[Power under fixed alternatives]\label{prop:pairwise-power}
Suppose under $H_A$,
\begin{equation}
W_1\bigl(\mu^{(i)},\mu^{(j)}\bigr)>0,
\qquad
\hat\mu_n^{(i)}\to\mu^{(i)}\ \text{a.s.},
\qquad
\hat\mu_n^{(j)}\to\mu^{(j)}\ \text{a.s.}
\end{equation}
Then the test in Definition~\ref{def:pairwise-test} is consistent:
\begin{equation}
\mathbb P_{H_A}\!\left(T_{ij,n}>q_{1-\alpha}^{(ij)}\right)\to 1.
\end{equation}
\end{proposition}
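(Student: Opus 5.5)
The plan follows the one-sample consistency argument of Proposition~\ref{prop:one-sample-power}, with the fixed reference measure $\mu$ replaced by a second random empirical measure and the triangle inequality doing the extra work. I interpret the almost sure convergences $\hat\mu_n^{(i)}\to\mu^{(i)}$ and $\hat\mu_n^{(j)}\to\mu^{(j)}$ in the $W_1$ sense. This is equivalent to weak convergence plus convergence of first moments, which is the standing finite-first-moment convention already used in the proof of Proposition~\ref{prop:one-sample-power}.

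The key step is the reverse triangle inequality for the metric $W_1$:
\begin{equation}
W_1\bigl(\hat\mu_n^{(i)},\hat\mu_n^{(j)}\bigr)\ \ge\ W_1\bigl(\mu^{(i)},\mu^{(j)}\bigr)-W_1\bigl(\hat\mu_n^{(i)},\mu^{(i)}\bigr)-W_1\bigl(\hat\mu_n^{(j)},\mu^{(j)}\bigr).
\end{equation}
The two subtracted terms tend to $0$ almost surely by assumption. Setting $c:=W_1(\mu^{(i)},\mu^{(j)})>0$, we get $\liminf_n W_1(\hat\mu_n^{(i)},\hat\mu_n^{(j)})\ge c$ almost surely. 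In fact we get equality in the limit, since the reverse bound follows from the triangle inequality in the same way. Hence $T_{ij,n}=\sqrt n\,W_1(\hat\mu_n^{(i)},\hat\mu_n^{(j)})\ge \sqrt n\,c/2$ for all large $n$ almost surely, and so $T_{ij,n}\to\infty$ almost surely.

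Next I would note that the critical value $q_{1-\alpha}^{(ij)}$ is a fixed finite number that does not depend on $n$. It is the quantile of $Z_{ij}=\int|G^{(i)}-G^{(j)}|\,dt$, which is an almost surely finite random variable because $G^{(i)},G^{(j)}\in L^1(\mathbb R)$ under the assumptions of Theorem~\ref{thm:pairwise-w1-limit}. Almost sure divergence of $T_{ij,n}$ gives $\ind\{T_{ij,n}>q_{1-\alpha}^{(ij)}\}\to 1$ almost surely. By dominated (or bounded) convergence, $\mathbb P_{H_A}(T_{ij,n}>q_{1-\alpha}^{(ij)})\to 1$.

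The only delicate point is interpretive rather than technical. Under $H_A$ the Gaussian limit $Z_{ij}$ is defined through the null covariance structure, so we must check that $q_{1-\alpha}^{(ij)}$ is still a well-defined finite constant; it is, since the test prescribes it independently of the data-generating law. We must also make sure that ``$\hat\mu_n\to\mu$ a.s.'' is read in $W_1$ rather than merely weakly. If only weak convergence is assumed, I would instead use the lower semicontinuity of $W_1$ under weak convergence, $\liminf_n W_1(\hat\mu_n^{(i)},\hat\mu_n^{(j)})\ge W_1(\mu^{(i)},\mu^{(j)})>0$ almost surely. This suffices for the same conclusion without any moment convergence.
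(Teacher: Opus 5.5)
Your proof is correct and follows the same route as the paper: the pairwise distance $W_1(\hat\mu_n^{(i)},\hat\mu_n^{(j)})$ stays asymptotically at least $c=W_1(\mu^{(i)},\mu^{(j)})>0$ almost surely, so $T_{ij,n}\to\infty$ against a fixed finite quantile. Your explicit triangle-inequality step and your lower-semicontinuity fallback tighten the paper's appeal to ``continuity of $W_1$ under weak convergence with finite first moments''. Only $\liminf_n W_1(\hat\mu_n^{(i)},\hat\mu_n^{(j)})\ge c$ is needed, and that already holds under mere weak convergence.
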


\begin{proof}
By continuity of $W_1$ (under finite first moments),
\begin{equation}
W_1\bigl(\hat\mu_n^{(i)},\hat\mu_n^{(j)}\bigr)
\to
W_1\bigl(\mu^{(i)},\mu^{(j)}\bigr)=:c>0
\quad\text{a.s.}
\end{equation}
Hence $T_{ij,n}=\sqrt n\,W_1(\hat\mu_n^{(i)},\hat\mu_n^{(j)})\to\infty$ almost surely, so for fixed $q_{1-\alpha}^{(ij)}$,
$\mathbb P_{H_A}(T_{ij,n}>q_{1-\alpha}^{(ij)})\to 1$.
\end{proof}

\begin{remark}[Multiple pairwise tests and Bonferroni correction]
If one performs pairwise tests over a collection of $K$ trajectory pairs, using level $\alpha$ for each individual test inflates the overall type-I error. A simple way to maintain overall asymptotic coverage (family-wise error rate) at level $\alpha$ is to use a Bonferroni correction: test each pair at level
\begin{equation}
\alpha_{\mathrm{pair}}=\frac{\alpha}{K},
\end{equation}
that is, reject pair $(i,j)$ only when
\begin{equation}
T_{ij,n}>q_{1-\alpha/K}^{(ij)}.
\end{equation}
Then, asymptotically, the probability of at least one false rejection across all $K$ pairs is at most $\alpha$.
\end{remark}

\subsection{Plug-In Covariance Estimator on a Fixed Grid}

In many applications, the covariance structure is not available in closed form and must therefore be estimated from the data. In this subsection, we work on a fixed grid $u_1,\dots,u_m$ and separate two issues: the covariance-estimation error on that grid, and the additional grid-discretization error incurred when approximating the continuous Wasserstein limit by a finite sum. We prove only the first step here and show that the resulting critical values consistently recover the corresponding oracle fixed-grid critical values. Developing this fixed-grid proof into a proof of the exact continuous Wasserstein limit requires a separate refinement argument with $m=m_n\to\infty$, which we do not address in this paper.

\begin{definition}[Finite-grid plug-in estimator of the long-run covariance matrix]\label{def:plug-in-cov-est}
Let $u_1<\dots<u_m\in\mathbb{R}$ be a finite grid of fixed size \(m\), and let
\begin{equation}
\hat F_n\!\left(u_i\right) = \frac{1}{n}\sum_{t=1}^n \ind\{X_t\le u_i\},
\qquad i=1,\dots,m.
\end{equation}
For all $t\leq n$, define the centered indicator vector:
\begin{equation}
Y_t^{(m)}
:=
\Bigl(
\ind\{X_t\le u_1\}-\hat F_n(u_1),
\dots,
\ind\{X_t\le u_m\}-\hat F_n(u_m)
\Bigr)^\top
\in\mathbb{R}^{m}.
\end{equation}
For lags \(k=0,1,\dots,L_n\) where $L_n$ is the maximum lag cutoff (also referred to as the bandwidth in the literature), define the empirical lag-\(k\) cross-covariance matrices of the vector process $Y_t^{(m)}$ by:
\begin{equation}
\hat\Gamma_{n,k}^{(m)}
:=
\frac{1}{n-k}\sum_{t=1}^{n-k}
Y_t^{(m)}\bigl(Y_{t+k}^{(m)}\bigr)^\top .
\end{equation}
Let \(w_{n,k}\in\mathbb{R}\) be a sequence of lag-window weights with \(w_{n,0}=1\).
The finite-grid plug-in estimator of the long-run covariance matrix follows the structure of a Newey-West estimator (\cite{newey:1986}) and is defined as follows: 
\begin{equation}
\widehat\Sigma_{n,m}
:=
\hat\Gamma_{n,0}^{(m)}
+
\sum_{k=1}^{L_n}
w_{n,k}\Bigl(
\hat\Gamma_{n,k}^{(m)}+\bigl(\hat\Gamma_{n,k}^{(m)}\bigr)^\top
\Bigr).
\end{equation}

\end{definition}

Now, consider the two vectors evaluated on the grid above -- one using the theoretical distribution and the other the empirical distribution:
\begin{equation}
\theta
:=
\bigl(F(u_1),\dots,F(u_m)\bigr)^\top,
\qquad
\hat\theta_n
:=
\bigl(\hat F_n(u_1),\dots,\hat F_n(u_m)\bigr)^\top.
\end{equation}
For a general \(\vartheta=(\vartheta_1,\dots,\vartheta_m)^\top\in\mathbb R^m\), define the following centered indicator process:
\begin{equation}
V_t(\vartheta)
:=
\bigl(
\ind\{X_t\le u_1\}-\vartheta_1,\dots,
\ind\{X_t\le u_m\}-\vartheta_m
\bigr)^\top.
\end{equation}
Evaluating at $\vartheta = \hat{\theta}_n$, we obtain $Y_t^{(m)}=V_t(\hat\theta_n)$ from above. Additionally, 
\begin{equation}
V_t(\theta)
=
\bigl(
\ind\{X_t\le u_1\}-F(u_1),\dots,
\ind\{X_t\le u_m\}-F(u_m)
\bigr)^\top.
\end{equation}
The theoretical covariance at lag $k$ can then be defined as:
\begin{equation}
\Gamma_k
:=
\mathbb E\bigl[V_0(\theta)V_k(\theta)^\top\bigr],
\qquad k\in\mathbb Z.
\end{equation}
The associated finite-grid, long-run covariance matrix is
\begin{equation}
\Sigma_m(a,b)
:=
\sum_{k\in\mathbb Z}
\Cov\Bigl(
\ind\{X_0\le u_a\}-F(u_a),
\ind\{X_k\le u_b\}-F(u_b)
\Bigr),
\qquad 1\le a,b\le m.
\end{equation}
Equivalently,
\begin{equation}
\Sigma_m
=
\sum_{k\in\mathbb Z}\Gamma_k.
\end{equation}

\begin{proposition}[Consistency of the finite-grid flat-top plug-in estimator]\label{prop:plugin-covariance}
Assume the following.

\begin{enumerate}
\item[(i)] \(\{V_t(\theta)\}_{t\in\mathbb Z}\) is second-order stationary and satisfies the
standard weak-dependence and moment assumptions ensuring consistency of kernel HAC
estimators; for example, Assumptions A, B, and C in \cite{andrews:1988}.

\item[(ii)] The finite-grid long-run covariance matrix $\Sigma_m$ is well defined; for example,
\begin{equation}
\sum_{k\in\mathbb Z}\|\Gamma_k\|_{\max}<\infty.
\end{equation}

\item[(iii)] The lag-window weights are of flat-top form
\begin{equation}
w_{n,k}=\lambda(k/L_n),\qquad k=0,1,\dots,L_n,
\end{equation}
where \(\lambda:\mathbb R\to[-1,1]\) is even, satisfies \(\lambda(0)=1\), is continuous at
\(0\) and all but finitely many points, is square-integrable, and is flat near the origin, i.e.
\begin{equation}
\lambda(x)=1\qquad\text{for } |x|\le c
\end{equation}
for some \(c>0\).

\item[(iv)] The bandwidth satisfies
\begin{equation}
L_n\to\infty,
\qquad
\frac{L_n}{n}\to 0.
\end{equation}
\end{enumerate}
Then, for each fixed \(m\),
\begin{equation}
\|\widehat\Sigma_{n,m}-\Sigma_m\|_{\max}\xrightarrow[n\to\infty]{\mathbb P}0.
\end{equation}
\end{proposition}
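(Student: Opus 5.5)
The plan is to split the estimation error into three pieces: an oracle HAC error, a centering error, and a bias from the lag window. First define the oracle estimator $\widetilde\Sigma_{n,m}$. It is built exactly as $\widehat\Sigma_{n,m}$ in Definition~\ref{def:plug-in-cov-est}, but uses the true centering $V_t(\theta)$ in place of $Y_t^{(m)}=V_t(\hat\theta_n)$, through
\begin{equation}
\widetilde\Gamma_{n,k}:=\frac1{n-k}\sum_{t=1}^{n-k}V_t(\theta)V_{t+k}(\theta)^\top .
\end{equation}
Then bound
\begin{equation}
\|\widehat\Sigma_{n,m}-\Sigma_m\|_{\max}\le \|\widehat\Sigma_{n,m}-\widetilde\Sigma_{n,m}\|_{\max}+\|\widetilde\Sigma_{n,m}-\mathbb E\widetilde\Sigma_{n,m}\|_{\max}+\|\mathbb E\widetilde\Sigma_{n,m}-\Sigma_m\|_{\max}.
\end{equation}
Since $m$ is fixed, the max norm is equivalent to any other matrix norm, so each entry $(a,b)$ can be treated separately.

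\textbf{Bias term.} By stationarity, $\mathbb E\widetilde\Gamma_{n,k}=\Gamma_k$. Hence
\begin{equation}
\mathbb E\widetilde\Sigma_{n,m}-\Sigma_m=\sum_{k=1}^{L_n}(w_{n,k}-1)(\Gamma_k+\Gamma_k^\top)-\sum_{k>L_n}(\Gamma_k+\Gamma_k^\top).
\end{equation}
By the flat-top property (iii), $w_{n,k}=1$ for $k\le cL_n$, and $|w_{n,k}-1|\le 2$ always. The max norm of this difference is therefore at most $4\sum_{k>cL_n}\|\Gamma_k\|_{\max}$. This tends to zero by the summability in (ii) together with $L_n\to\infty$ from (iv).

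\textbf{Variance term.} This is where I invoke (i). Under Andrews' Assumptions A--C, with a kernel satisfying the stated regularity (even, bounded, continuous at $0$ and almost everywhere, square integrable) and $L_n/n\to0$, the known-mean kernel HAC estimator is consistent. I would cite Andrews' consistency theorem directly, applied to the $m^2$ scalar cross-covariance sequences. If a self-contained argument is wanted, the alternative uses the fourth-cumulant summability contained in Andrews' Assumption A. It gives $\operatorname{Var}(\widetilde\Gamma_{n,k}(a,b))\le C/(n-k)$ and bounds the covariances across lags. Then
\begin{equation}
\operatorname{Var}\bigl(\widetilde\Sigma_{n,m}(a,b)\bigr)\lesssim \frac{L_n}{n}\,\frac1{L_n}\sum_k\lambda(k/L_n)^2\to0,
\end{equation}
where $\frac1{L_n}\sum_k\lambda(k/L_n)^2\to\int\lambda^2<\infty$. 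Chebyshev's inequality then gives the convergence in probability.

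\textbf{Centering term.} This is the step I expect to need the most care. Write $\delta_n:=\hat\theta_n-\theta$, so that $V_t(\hat\theta_n)=V_t(\theta)-\delta_n$. Expanding gives
\begin{equation}
\hat\Gamma_{n,k}-\widetilde\Gamma_{n,k}=-\bar V_{n,k}^{(1)}\delta_n^\top-\delta_n(\bar V_{n,k}^{(2)})^\top+\delta_n\delta_n^\top,
\end{equation}
where $\bar V^{(1)}_{n,k}$ and $\bar V^{(2)}_{n,k}$ are averages of $V_t(\theta)$ over the index windows $1..n-k$ and $1+k..n$. Summability of $\Gamma_k$ in (ii) gives $\operatorname{Var}(\delta_n)=O(1/n)$, hence $\delta_n=O_P(n^{-1/2})$. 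The window averages need a bound uniform in $k\le L_n$. I would use $\bar V^{(1)}_{n,k}=\frac{n}{n-k}\bigl(\delta_n-\frac1n\sum_{t>n-k}V_t(\theta)\bigr)$, together with entries of $V_t$ bounded by $1$ and $k/n\le L_n/n\to0$. This gives $\max_{k\le L_n}\|\bar V^{(\cdot)}_{n,k}\|_{\max}\le O_P(n^{-1/2})+2L_n/n$. Summing over $k$ with $|w_{n,k}|\le1$, the centering error is bounded by
\begin{equation}
O\bigl(L_n\bigr)\Bigl(O_P(n^{-1})+O_P(n^{-1/2})\,\tfrac{L_n}{n}\Bigr)=O_P\Bigl(\tfrac{L_n}{n}\Bigr)+O_P\Bigl(\tfrac{L_n^2}{n^{3/2}}\Bigr)=o_P(1),
\end{equation}
since $L_n/n\to0$ implies $L_n^2/n^{3/2}\le (L_n/n)\,L_n n^{-1/2}\le (L_n/n)\,n^{1/2}$. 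That last bound alone is not enough. It is better to bound the tail sum more carefully: $\frac1n\sum_{t>n-k}V_t(\theta)$ has standard deviation $O(\sqrt k/n)$ by (ii), so the extra term is $O_P(L_n^{3/2}/n^{3/2})=o_P(1)$.

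Combining the three terms gives the claim. The main obstacle is making the centering correction uniform over the growing number $L_n$ of lags under only $L_n/n\to0$. If the crude bound fails, I would fall back on Andrews' result, which already covers estimated-mean residuals, treating $\hat\theta_n$ as a $\sqrt n$-consistent plug-in parameter.
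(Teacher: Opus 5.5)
Your proposal is correct and follows the paper's route. The exact affine identity $V_t(\hat\theta_n)=V_t(\theta)-\delta_n$ reduces the plug-in error to linear and quadratic terms in $\delta_n$ that are negligible after summing over the $L_n$ lags, and consistency of the true-centered oracle comes from Andrews' kernel-HAC theory (your explicit flat-top bias bound just unpacks part of that citation). The tail-sum refinement and the fallback are not needed: by (ii), each entry of $\bar V^{(1)}_{n,k}$ has variance at most $C/(n-k)\le 2C/n$ uniformly in $k\le L_n$ once $L_n\le n/2$, so Markov's inequality applied to $\sum_{k\le L_n}\|\bar V^{(1)}_{n,k}\|_{\max}$ gives $O_{\mathbb P}(L_n n^{-1/2})$ and hence the paper's $O_{\mathbb P}(L_n/n)$ centering bound directly.
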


\begin{proof}
The proof proceeds in two steps. First, we compare \(\widehat\Sigma_{n,m}\) with the corresponding
oracle estimator centered at the true vector \(\theta\). Second, we show that this
oracle estimator converges in probability to the target covariance matrix \(\Sigma_m\). For
\(k=0,1,\dots,L_n\), define the true-centered lag-covariance estimator by
\begin{equation}
\tilde\Gamma_{n,k} := \frac{1}{n-k}\sum_{t=1}^{n-k}V_t(\theta)V_{t+k}(\theta)^\top,
\end{equation}
and
\begin{equation} 
\label{eq:sigma_tilde_def}
\widetilde\Sigma_{n,m}
:=
\tilde\Gamma_{n,0}
+
\sum_{k=1}^{L_n}
w_{n,k}\bigl(\tilde\Gamma_{n,k}+\tilde\Gamma_{n,k}^\top\bigr).
\end{equation}
Write \(\Delta_n:=\hat\theta_n-\theta\). Since \(V_t(\vartheta)\) is affine in \(\vartheta\),
its first-order Taylor expansion around \(\theta\) is exact:
\begin{equation}
V_t(\hat\theta_n)
=
V_t(\theta)
+
\frac{\partial}{\partial\vartheta}V_t(\theta)\,\Delta_n
=
V_t(\theta)-\Delta_n,
\qquad
\frac{\partial}{\partial\vartheta}V_t(\vartheta)=-\mathbb{I}_m.
\end{equation}
For each lag \(k=0,1,\dots,L_n\), define
\begin{equation}
\Gamma_{n,k}(\vartheta)
:=
\frac{1}{n-k}\sum_{t=1}^{n-k}V_t(\vartheta)V_{t+k}(\vartheta)^\top.
\end{equation}
Then \(\Gamma_{n,k}(\hat\theta_n)=\hat\Gamma_{n,k}^{(m)}\) and
\(\Gamma_{n,k}(\theta)=\tilde\Gamma_{n,k}\), and expanding at \(\theta\) gives
\begin{equation}
\Gamma_{n,k}(\hat\theta_n)-\Gamma_{n,k}(\theta)
=
-\frac{1}{n-k}\sum_{t=1}^{n-k}V_t(\theta)\Delta_n^\top
-\frac{1}{n-k}\sum_{t=1}^{n-k}\Delta_n V_{t+k}(\theta)^\top
+\Delta_n\Delta_n^\top.
\end{equation}
Moreover, the linear terms are centered, since
\begin{equation}
\mathbb E\!\left[V_t(\theta)\frac{\partial}{\partial\vartheta}V_{t-j}(\theta)^\top\right]
=
-\mathbb E[V_t(\theta)]
=0.
\qquad\text{for all }j,
\end{equation}
Now observe that \(\Delta_n=O_{\mathbb P}(n^{-1/2})\) by the \(\sqrt n\)-consistency of
\(\hat\theta_n\). For one of the linear terms,
\begin{equation}
\left\|\frac{1}{n-k}\sum_{t=1}^{n-k}V_t(\theta)\Delta_n^\top\right\|_{\max}
\le
\|\Delta_n\|_{\max}
\left\|\frac{1}{n-k}\sum_{t=1}^{n-k}V_t(\theta)\right\|_{\max} .
\end{equation}
Under the same weak-dependence assumptions, the centered average
\(\frac{1}{n-k}\sum_{t=1}^{n-k}V_t(\theta)\) is \(O_{\mathbb P}((n-k)^{-1/2})\); since
\(k\le L_n\) and \(L_n/n\to0\), we have \(n-k\asymp n\) uniformly over the retained lags.
Therefore
\(\frac{1}{n-k}\sum_{t=1}^{n-k}V_t(\theta)\Delta_n^\top=O_{\mathbb P}(n^{-1})\)
for each lag, and after summing over \(k\le L_n\) with bounded weights its total contribution is
\(O_{\mathbb P}(L_n/n)=o_{\mathbb P}(1)\). The same bound applies to the second linear term,
while \(\Delta_n\Delta_n^\top=O_{\mathbb P}(n^{-1})\). Hence the difference between the
plug-in and oracle lag-covariance estimators is a sum of asymptotically negligible terms.
Summing over \(k\le L_n\) with bounded weights yields the same conclusion for the HAC
estimators; cf. the standard plug-in HAC arguments in \cite{andrews:1988,politis:2011}.
In particular,
\begin{equation}
    \|\widehat\Sigma_{n,m}-\widetilde\Sigma_{n,m}\|_{\max}\xrightarrow{\mathbb P}0.
    \label{eq:finite-grid-conv}
\end{equation}

It remains to show that \(\widetilde\Sigma_{n,m}\to\Sigma_m\) in probability.
Fix \(1\le a,b\le m\). Write
\begin{equation}
V_t^{(a)}(\theta):=\ind\{X_t\le u_a\}-F(u_a),
\qquad
V_t^{(b)}(\theta):=\ind\{X_t\le u_b\}-F(u_b).
\end{equation}
Then $\widetilde\Sigma_{n, m}$ from (\ref{eq:sigma_tilde_def}) is precisely the kernel HAC estimator of the long-run covariance $\Sigma_m(a,b)$.

Since \(\lambda\) is an admissible lag window and the bandwidth satisfies \(L_n\to\infty\)
and \(L_n/n\to0\), standard consistency results for kernel HAC estimators imply that
\begin{equation}
\widetilde\Sigma_{n,m}(a,b)\xrightarrow{\mathbb P}\Sigma_m(a,b)
\qquad\text{for each fixed }a,b;
\end{equation}
see \cite{andrews:1988}. Therefore, because \(m\) is fixed,
\begin{equation}
    \|\widetilde\Sigma_{n,m}-\Sigma_m\|_{\max}\xrightarrow{\mathbb P}0.
    \label{eq:plug-in-conv}
\end{equation}
Finally, by the triangle inequality,
\begin{equation}
\|\widehat\Sigma_{n,m}-\Sigma_m\|_{\max}
\le
\|\widehat\Sigma_{n,m}-\widetilde\Sigma_{n,m}\|_{\max}
+
\|\widetilde\Sigma_{n,m}-\Sigma_m\|_{\max}.
\end{equation}
Combining this with (\ref{eq:finite-grid-conv}) and (\ref{eq:plug-in-conv}) yields
\begin{equation}
\|\widehat\Sigma_{n,m}-\Sigma_m\|_{\max}\xrightarrow{\mathbb P}0,
\end{equation}
which proves the claim.
\end{proof}

\begin{remark}
The role of the flat-top window is not to obtain mere consistency, which is attainable for general kernel HAC \cite{andrews:1988, newey:1986}, but to improve the
bias and, when additional smoothness is present, the rate of convergence; see
\cite{politis:1999,politis:2011}.
\end{remark}

\begin{lemma}[Plug-in Gaussian critical values converge to the oracle fixed-grid critical values]\label{lem:plugin-quantile}
Let $G^{(i)}$ and $G^{(j)}$ be the Gaussian processes from Theorem~\ref{thm:pairwise-w1-limit}. Fix the grid size $m$, let $\Delta_\ell:=u_{\ell+1}-u_\ell$ for $\ell=1,\dots,m-1$, and define the grid approximation
\begin{equation}
Z_m^{(ij)}
:=
\sum_{\ell=1}^{m-1}
\Delta_\ell\,
\bigl|G^{(i)}(u_\ell)-G^{(j)}(u_\ell)\bigr|.
\end{equation}
Let $q_{m,1-\alpha}^{(ij)}$ be the $(1-\alpha)$-quantile of $Z_m^{(ij)}$. Assume the hypotheses of Proposition~\ref{prop:plugin-covariance}, and suppose
\begin{equation}
\mathbb P\bigl(Z_m^{(ij)}=q_{m,1-\alpha}^{(ij)}\bigr)=0.
\end{equation}
Let
\begin{equation}
\widehat{\mathbf G}_{n,m}^{(r)}
:=
\bigl(\widehat G_{n,m}^{(r)}(u_1),\dots,\widehat G_{n,m}^{(r)}(u_m)\bigr)^\top,
\qquad r\in\{i,j\},
\end{equation}
be, conditionally on the data, independent centered Gaussian vectors in $\mathbb R^m$ with covariance matrix $\widehat\Sigma_{n,m}$, and define
\begin{equation}
\widehat Z_{n,m}^{(ij)}
:=
\sum_{\ell=1}^{m-1}
\Delta_\ell\,
\bigl|\widehat G_{n,m}^{(i)}(u_\ell)-\widehat G_{n,m}^{(j)}(u_\ell)\bigr|.
\end{equation}
Let $\widehat q_{n,m,1-\alpha}^{(ij)}$ denote the conditional $(1-\alpha)$-quantile of $\widehat Z_{n,m}^{(ij)}$. Then, for fixed $m$,
\begin{equation}
\widehat q_{n,m,1-\alpha}^{(ij)}\xrightarrow{\mathbb P}q_{m,1-\alpha}^{(ij)}.
\end{equation}
\end{lemma}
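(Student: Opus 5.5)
The plan is to view $\widehat Z_{n,m}^{(ij)}$ and $Z_m^{(ij)}$ as the same continuous functional applied to Gaussian vectors whose covariances are $\widehat\Sigma_{n,m}$ and $\Sigma_m$. Proposition~\ref{prop:plugin-covariance} then supplies all the probabilistic input. Define $h:\mathbb R^m\to[0,\infty)$ by $h(x)=\sum_{\ell=1}^{m-1}\Delta_\ell|x_\ell|$. This map is continuous and Lipschitz, with $|h(x)-h(y)|\le (u_m-u_1)\|x-y\|_{\max}$.

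Since $G^{(i)}$ and $G^{(j)}$ are independent with the same covariance structure, the vector $(G^{(i)}(u_\ell)-G^{(j)}(u_\ell))_{\ell\le m}$ is $N(0,2\Sigma_m)$. Here we use that the finite-dimensional covariance of $G$ at the grid points is exactly $\Sigma_m$, by \eqref{eq:cov-G} with point evaluations. Hence $Z_m^{(ij)}=h(D)$ with $D\sim N(0,2\Sigma_m)$. Conditionally on the data, $\widehat Z_{n,m}^{(ij)}=h(\widehat D_n)$ with $\widehat D_n\sim N(0,2\widehat\Sigma_{n,m})$. If $\widehat\Sigma_{n,m}$ fails to be positive semidefinite, we replace it by its projection onto the PSD cone. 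This projection is also consistent, because $\Sigma_m$ is PSD and the projection is 1-Lipschitz in Frobenius norm, which is equivalent to the max norm for fixed $m$.

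The core step is deterministic. Let $\Sigma^{(k)}\to\Sigma_m$ be any sequence of PSD matrices. Write $\Sigma^{(k)}=A_kA_k^\top$ with $A_k=(\Sigma^{(k)})^{1/2}$. The matrix square root is continuous on the PSD cone, so $A_k\to A:=\Sigma_m^{1/2}$. For a fixed standard normal $\xi$ we have $\sqrt2A_k\xi\to\sqrt2A\xi$ almost surely, hence $h(\sqrt2A_k\xi)\Rightarrow Z_m^{(ij)}$. Equivalently, one can note that Gaussian characteristic functions converge. Let $H_k$ and $H$ denote the distribution functions of these variables. The no-atom assumption makes $H$ continuous at $q:=q_{m,1-\alpha}^{(ij)}$. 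We also need $H$ to be strictly increasing at $q$, meaning $H(q-\varepsilon)<1-\alpha<H(q+\varepsilon)$ for every $\varepsilon>0$. This holds because $h(D)$ is a seminorm of a Gaussian vector: its support is an interval $[0,\infty)$ (or $\{0\}$ in the degenerate case, which the no-atom condition excludes), and its distribution function is strictly increasing on the interior of the support. Combined with weak convergence, this yields the standard quantile-convergence conclusion $H_k^{-1}(1-\alpha)\to q$. Justifying this strict monotonicity is the one point needing care. If the argument looks delicate, I would cite the classical fact that norms of Gaussian vectors have no atoms and positive density on $(0,\infty)$ when nondegenerate, or else assume it explicitly alongside the no-atom condition.

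To transfer the deterministic result to convergence in probability, apply the subsequence characterization. By Proposition~\ref{prop:plugin-covariance}, every subsequence has a further subsequence along which $\widehat\Sigma_{n,m}\to\Sigma_m$ almost surely. Along it the deterministic step gives $\widehat q_{n,m,1-\alpha}^{(ij)}\to q_{m,1-\alpha}^{(ij)}$ almost surely, which proves convergence in probability. Equivalently, the map $\Sigma\mapsto$ quantile is continuous at $\Sigma_m$, and the continuous mapping theorem applies. The main obstacle is the strict-increase condition at the quantile noted above. The rest consists of routine continuity arguments.
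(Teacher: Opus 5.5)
Your proposal is correct and follows the same route as the paper. Both arguments take consistency of $\widehat\Sigma_{n,m}$ from Proposition~\ref{prop:plugin-covariance}, deduce weak convergence of the conditional Gaussian laws with covariance $2\widehat\Sigma_{n,m}\to 2\Sigma_m$, push this through the continuous map $x\mapsto\sum_{\ell}\Delta_\ell|x_\ell|$, and conclude with quantile convergence. Your additional steps are sound refinements that the paper's brief proof skips: the PSD projection (flat-top windows need not give PSD estimates), and the check that the law of $Z_m^{(ij)}$ puts positive mass on every interval $(q,q+\varepsilon)$, which is what quantile convergence actually requires beyond the paper's no-atom condition alone.
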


\begin{proof}
By Proposition~\ref{prop:plugin-covariance},
\begin{equation}
\widehat\Sigma_{n,m}\xrightarrow{\mathbb P}\Sigma_m.
\end{equation}
Since $m$ is fixed, the conditional law of the Gaussian difference vector $\widehat{\mathbf G}_{n,m}^{(i)}-\widehat{\mathbf G}_{n,m}^{(j)}$, whose covariance is $2\widehat\Sigma_{n,m}$, converges weakly in probability to the law of
\begin{equation}
\mathbf G_m^{(i)}-\mathbf G_m^{(j)},
\qquad
\mathbf G_m^{(r)}:=\bigl(G^{(r)}(u_1),\dots,G^{(r)}(u_m)\bigr)^\top,
\quad r\in\{i,j\},
\end{equation}
whose covariance is $2\Sigma_m$. Applying the continuous mapping
\begin{equation}
x\mapsto \sum_{\ell=1}^{m-1}\Delta_\ell |x_\ell|
\end{equation}
from $\mathbb R^m$ to $\mathbb R$ therefore yields convergence of the conditional law of $\widehat Z_{n,m}^{(ij)}$ to that of $Z_m^{(ij)}$. Since $\mathbb P\bigl(Z_m^{(ij)}=q_{m,1-\alpha}^{(ij)}\bigr)=0$, standard quantile continuity implies
\begin{equation}
\widehat q_{n,m,1-\alpha}^{(ij)}\xrightarrow{\mathbb P}q_{m,1-\alpha}^{(ij)}.
\end{equation}
This proves the claim.
\end{proof}

\begin{remark}[Fixed-grid estimation versus the continuous Wasserstein limit]
Proposition~\ref{prop:plugin-covariance} and Lemma~\ref{lem:plugin-quantile} address only the fixed-grid covariance-estimation problem: for each fixed $m$, the random matrix $\widehat\Sigma_{n,m}$ consistently estimates $\Sigma_m$, and the plug-in Gaussian quantile consistently estimates the corresponding oracle quantile $q_{m,1-\alpha}^{(ij)}$ for the discretized functional $Z_m^{(ij)}$. In particular, these fixed-grid results do not by themselves imply
\begin{equation}
\widehat q_{n,m,1-\alpha}^{(ij)}\xrightarrow{\mathbb P}q_{1-\alpha}^{(ij)}
\end{equation}
for the continuous Wasserstein limit
\begin{equation}
Z_{ij}=\int_{\mathbb R}\bigl|G^{(i)}(t)-G^{(j)}(t)\bigr|\,dt.
\end{equation}
To obtain asymptotic validity for the exact continuous $W_1$ statistic, one would need an additional grid-refinement scheme $m=m_n\to\infty$ together with assumptions ensuring that the discretization error vanishes, for example $Z_{m_n}^{(ij)}\Rightarrow Z_{ij}$. We leave this refinement to future work.
\end{remark}

\section{Simulation Results}
In this section, we report three simulation studies that illustrate both: (i) coverage probability under the null hypothesis, where trajectories converge to the same invariant distribution, and (ii) power under fixed alternatives, where trajectories converge to different invariant distributions. In all three, we focus only on the pairwise distribution of the Wasserstein statistic. We refer to the ``convergent case'' when the scaled empirical pairwise statistics converge to the theoretical Gaussian limit, and to the ``divergent case'' when they do not. 

\subsection{Known Finite Covariance}
We begin with a stationary $\operatorname{MA}(3)$ process with coefficients $\theta=(\theta_1,\theta_2,\theta_3)=(0.6,0.4,0.2)$:
\begin{equation*}
    X_t = \mu + \varepsilon_t + \theta_1\varepsilon_{t-1}+\theta_2\varepsilon_{t-2}+\theta_3\varepsilon_{t-3},
    \qquad \varepsilon_t \stackrel{iid}{\sim} \mathcal{N}(0,1).
\end{equation*}

For the convergent setting (Figure~\ref{fig:ma_comparison}(a)), all 100 trajectories are generated with mean $\mu=0$. For the divergent setting (Figure~\ref{fig:ma_comparison}(b)), we split trajectories into two groups with means $\mu^{(0)}=0$ and $\mu^{(1)}=0.5$, and compute pairwise Wasserstein distances only within groups. In both settings, the finite-lag covariance structure is known explicitly:
\begin{equation*}
\gamma(h):=\operatorname{Cov}(X_t,X_{t+h})=
\begin{cases}
\sigma_\varepsilon^2\displaystyle\sum_{j=0}^{3-|h|}\psi_j\psi_{j+|h|}, & |h|\le 3,\\
0, & |h|>3,
\end{cases}
\qquad
(\psi_0,\psi_1,\psi_2,\psi_3)=(1,\theta_1,\theta_2,\theta_3),
\end{equation*}
and with $\sigma_\varepsilon^2=1$ and $\theta=(0.6,0.4,0.2)$ this gives
\begin{equation*}
\gamma(0)=1.56,\qquad \gamma(1)=0.92,\qquad \gamma(2)=0.52,\qquad \gamma(3)=0.20,\qquad \gamma(h)=0\ (|h|\ge 4).
\end{equation*}
These closed-form values are used to simulate the Gaussian limit for the pairwise test statistic.
\begin{figure}[h!]
    \centering
    \begin{subfigure}[t]{0.49\textwidth}
        \centering
        \includegraphics[width=\textwidth]{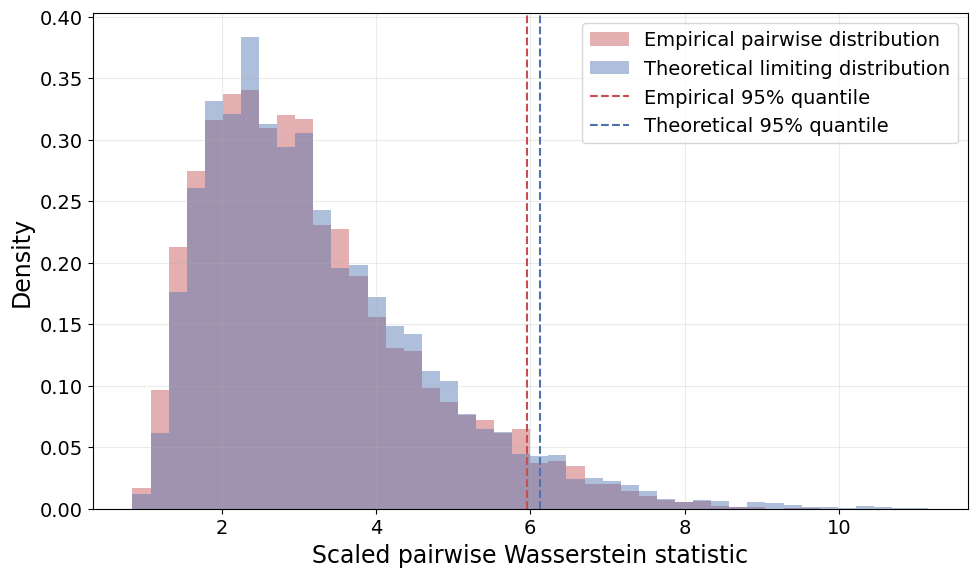}
        \caption{}
    \end{subfigure}
    \hfill
    \begin{subfigure}[t]{0.49\textwidth}
        \centering
        \includegraphics[width=\textwidth]{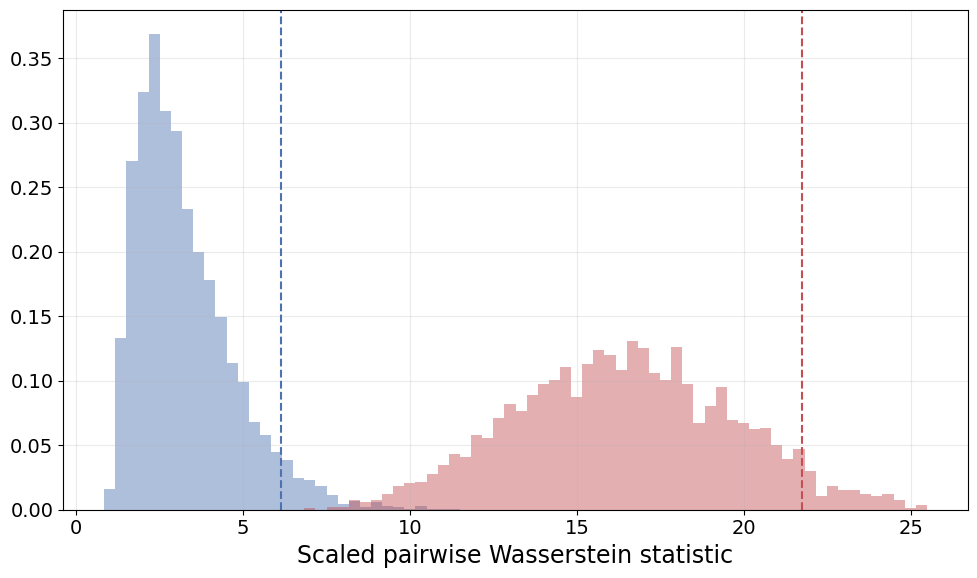}
        \caption{}
    \end{subfigure}
    \caption{Comparison of the distribution of the scaled empirical Wasserstein statistics $\sqrt{n}\,W_1\bigl(\hat\mu_n^{(i)},\hat\mu_n^{(j)}\bigr)$ and the theoretical limiting distribution for an $\operatorname{MA}(3)$ sequence. (a) Null case: trajectories share one invariant distribution; (b) Alternative case: trajectories come from two different invariant distributions.}
    \label{fig:ma_comparison}
\end{figure}

Figure~\ref{fig:ma_comparison}(a) shows close agreement between the empirical distribution of the statistic and its asymptotic approximation under $H_0$, including good alignment of the 95\% quantile. Figure~\ref{fig:ma_comparison}(b) shows clear separation under $H_A$; even for moderate sample sizes (e.g., $n=1000$), the test already exhibits high power. This pattern is reinforced by Figure~\ref{fig:ma_divergent_densities} and Table~\ref{tab:ma_power}: as $n$ increases, the distribution of $\sqrt{n}\,W_1$ shifts to the right, yielding rapidly increasing rejection probabilities and confirming consistency.

\begin{figure}[h!]
    \centering
    \includegraphics[width=0.7\textwidth]{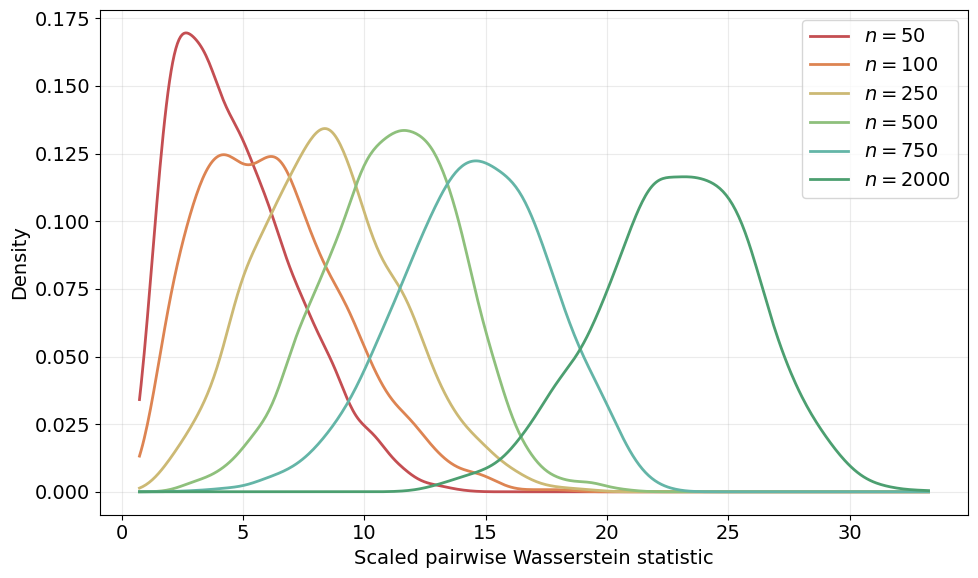}
    \caption{Sampling distributions of $\sqrt{n}\,W_1\!\left(\hat\mu_n^{(i)},\hat\mu_n^{(j)}\right)$ under the alternative in an $\operatorname{MA}(3)$ setting, showing rightward drift as $n$ increases.}
    \label{fig:ma_divergent_densities}
\end{figure}

\begin{table}[h!]
\centering
\begin{equation}
\text{\bf True Rejection Rates}
\end{equation}
\begin{tabular}{cccc}
\hline
\textbf{$n$} & \boldmath$\alpha=0.01$ & \boldmath$\alpha=0.05$ & \boldmath$\alpha=0.10$ \\
\hline
50   & 9.28\%  & 22.24\% & 34.88\% \\
100  & 22.52\% & 41.92\% & 56.76\% \\
250  & 49.64\% & 73.08\% & 84.00\% \\
500  & 82.52\% & 94.52\% & 97.60\% \\
750  & 96.88\% & 99.04\% & 99.72\% \\
1000 & 98.12\% & 99.44\% & 99.80\% \\
\hline
\end{tabular}
\caption{Empirical power (true rejection rate) in the divergent $\operatorname{MA}(3)$ setting, by sample size $n$ and test level $\alpha$.}
\label{tab:ma_power}
\end{table}

\subsection{Known Infinite Covariance}

\begin{figure}[h!]
    \centering
    \includegraphics[width=0.7\textwidth]{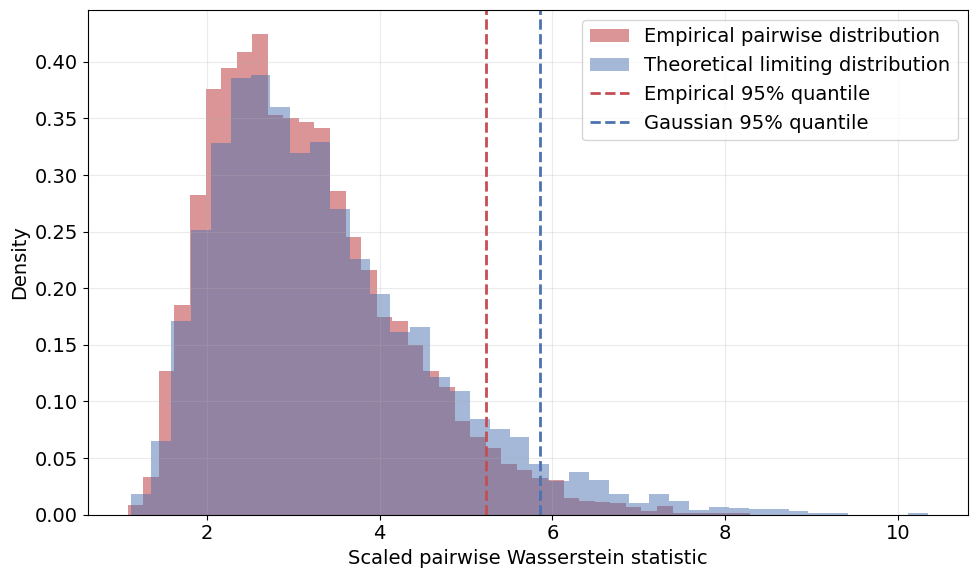}
    \caption{Comparison of the distribution of the scaled empirical Wasserstein statistics $\sqrt{n}\,W_1\bigl(\hat\mu_n^{(i)},\hat\mu_n^{(j)}\bigr)$ and the theoretical limiting distribution for an $\operatorname{ARMA}(5,3)$ sequence in the convergent case where the Wasserstein statistic agrees well with the asymptotic limit distribution under $H_0$.}
    \label{fig:arma_convergent}
\end{figure}

We now consider a process with an infinite, but explicitly known, covariance structure: an $\operatorname{ARMA}(5,3)$ model. It is defined by
\begin{equation*}
\Phi(B)X_t=\Theta(B)\varepsilon_t,
\qquad
\varepsilon_t\stackrel{iid}{\sim}\mathcal N(0,1),
\end{equation*}
where the lag polynomials are
\begin{equation*}
\Phi(B)=1-0.7B+0.25B^2+0.18B^3-0.12B^4+0.08B^5,
\qquad
\Theta(B)=1+0.5B-0.4B^2+0.25B^3.
\end{equation*}

While this backshift representation is standard, it is more useful for our purposes to work with the induced covariance structure. Under stationarity, the process admits a causal MA($\infty$) representation
\begin{equation}
X_t=\sum_{j=0}^{\infty}\psi_j\varepsilon_{t-j},
\qquad
\sum_{j=0}^{\infty}\psi_j z^j=\frac{\Theta(z)}{\Phi(z)},
\end{equation}
which yields the autocovariance function
\begin{equation*}
\gamma(h)=\operatorname{Cov}(X_t,X_{t+h})
=\sum_{j=0}^{\infty}\psi_j\psi_{j+|h|}, 
\qquad h\in\mathbb Z.
\end{equation*}
In contrast to finite-order moving average models, $\gamma(h)$ is nonzero for infinitely many lags, although it decays geometrically.

In the numerical implementation, we do not compute the coefficients $\psi_j$ explicitly. Instead, we exploit the fact that the full autocovariance sequence $\{\gamma(h)\}$ is available in closed form for ARMA models. Using the \texttt{statsmodels} implementation, we obtain
\begin{equation}
\{\gamma(0),\ldots,\gamma(K)\}
\end{equation}
directly via the model-implied autocovariance function, and truncate at a sufficiently large lag $K$ ($200$ in this case).

These autocovariances are then used to construct the covariance kernel of the limiting Gaussian process,
\begin{equation}
\Gamma_K(x,y)
=
\operatorname{Cov}_0(x,y)
+
2\sum_{k=1}^{K}\operatorname{Cov}_k(x,y),
\end{equation}
where each lag-$k$ contribution depends only on the correlation
\begin{equation}
\rho_k=\frac{\gamma(k)}{\gamma(0)}.
\end{equation}
The resulting kernel is discretized on a grid and assembled into a covariance matrix $\Sigma$, from which Gaussian process realizations are simulated.

\medskip

This approach avoids estimating the covariance structure from data and instead leverages the exact model-implied dependence, isolating the effect of finite-sample approximation from covariance misspecification.

Figure~\ref{fig:arma_convergent} exhibits the same qualitative pattern as in the finite-covariance experiment: under the null hypothesis, the empirical distribution of the test statistic tracks its asymptotic approximation closely.

\subsection{Estimated Infinite Covariance (Double Pendulum)}
\begin{figure}[h!]
    \centering
    \begin{tikzpicture}[line cap=round,line join=round,scale=0.5,every node/.style={font=\large}]
        \def\ang{32}
        \def\Lone{4.2}
        \def\Ltwo{3.8}

        \coordinate (O) at (0,0);
        \coordinate (M1) at ({\Lone*cos(-90+\ang)},{\Lone*sin(-90+\ang)});
        \coordinate (M2) at ($(M1)+({\Ltwo*cos(-90+\ang)},{\Ltwo*sin(-90+\ang)})$);

        \draw[line width=2.2pt] (-1.8,0.35) -- (1.8,0.35);
        \foreach \x in {-1.55,-1.15,-0.75,-0.35,0.05,0.45,0.85,1.25} {
            \draw[line width=1.6pt] (\x,0.35) -- ++(0.22,0.28);
        }

        \draw[line width=2pt] (O) -- (M1);
        \draw[line width=2pt] (M1) -- (M2);

        \draw[dashed,line width=1.5pt] (O) -- (0,-5.0);
        \draw[dashed,line width=1.5pt] (M1) -- ($(M1)+(0,-4.2)$);

        \draw[line width=1.6pt,fill=white] (O) circle (0.34);
        \draw[line width=1.6pt,fill=white] (M1) circle (0.34);
        \draw[line width=1.6pt,fill=white] (M2) circle (0.34);

        \draw[line width=1.4pt,->] (0,-2.0) arc[start angle=-90,end angle={-90+\ang},radius=2.0];
        \draw[line width=1.4pt,->] ($(M1)+(0,-1.65)$) arc[start angle=-90,end angle={-90+\ang},radius=1.65];

        \node at ($(O)!0.53!(M1)+(0.8,0.4)$) {$l_1$};
        \node at ($(M1)!0.55!(M2)+(0.85,0.18)$) {$l_2$};
        \node at ($(M1)+(1.1,0.2)$) {$m_1$};
        \node at ($(M2)+(1.1,0.0)$) {$m_2$};
        \node at (0.95,-2.75) {$\theta_1$};
        \node at ($(M1)+(0.7,-2.5)$) {$\theta_2$};
    \end{tikzpicture}
    \caption{Double pendulum geometry used in the long-run covariance estimation experiment.}
    \label{fig:dp_diagram}
\end{figure}
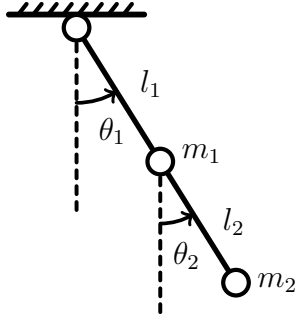

Finally, we consider a realistic setting in which the long-run covariance is unknown and must be estimated from data. For this study, we simulate two ensembles of double pendulum trajectories (Figure~\ref{fig:dp_diagram}) and track four observables: the angular positions $(\theta_1,\theta_2)$ and angular velocities $(\omega_1,\omega_2)$. The physical parameters are $L_1=L_2=2.0\,\mathrm{m}$ and $M_1=M_2=1.0\,\mathrm{kg}$. Each ensemble contains $1000$ trajectories, each of length $100{,}000$ time steps. Under Hamiltonian dynamics, energy is conserved, and invariant measures are therefore supported on fixed-energy surfaces. Consequently, comparisons of limiting distributions are meaningful only within a common energy level \cite{arovas:2013}.
We consider two ensembles at energies $E_1 = 70\,\mathrm{J}$ and $E_2 = 178\,\mathrm{J}$, with initial conditions sampled as described in Appendix~A. Empirically, the lower-energy ensemble explores phase space more uniformly and exhibits stronger convergence of all four observables, whereas the higher-energy ensemble shows non-negligible heterogeneity, particularly in the velocity coordinates.

We compute pairwise Wasserstein distances after a burn-in of $50{,}000$ steps. Since no closed-form covariance is available for the double pendulum observables, we estimate the long-run covariance using Definition~\ref{def:plug-in-cov-est} with bandwidth $L_n=250$ and flat-hat lag window $w_{n,k}=\lambda\!\left(k/(L_n+1)\right)$, where $\lambda(x)=1$ for $|x|\le 1/2$, $\lambda(x)=2(1-|x|)$ for $1/2<|x|\le 1$, and $\lambda(x)=0$ for $|x|>1$. 

The results where the trajectory distributions roughly converge to the same distribution are presented in Figure~\ref{fig:dp_pairwise_comparison}(a). Although not perfect, the theoretical limiting distribution and empirical distribution are on the same scale, have the same shape, comparable 95th quantile without a clear tendency to over- or underestimate, and the main difference occuring in the fatness of the tails. 
In the divergent case (Figure~\ref{fig:dp_pairwise_comparison}(b)), especially for $\omega_1$ and $\omega_2$, the empirical statistic is clearly multimodal, reflecting a mixture of trajectory pairs that converge to several invariant distributions. 
Table~\ref{tab:dp_rejection_comparison} shows strong power growth with increasing test level, with residual under-coverage attributable to the remaining mixture of convergent and divergent pair types; unlike the MA example, we do not pre-filter the data to remove pairs that converge to the same invariant distribution.

\begin{figure}[h!]
    \centering
    \begin{subfigure}{0.49\textwidth}
        \centering
        \includegraphics[width=\textwidth]{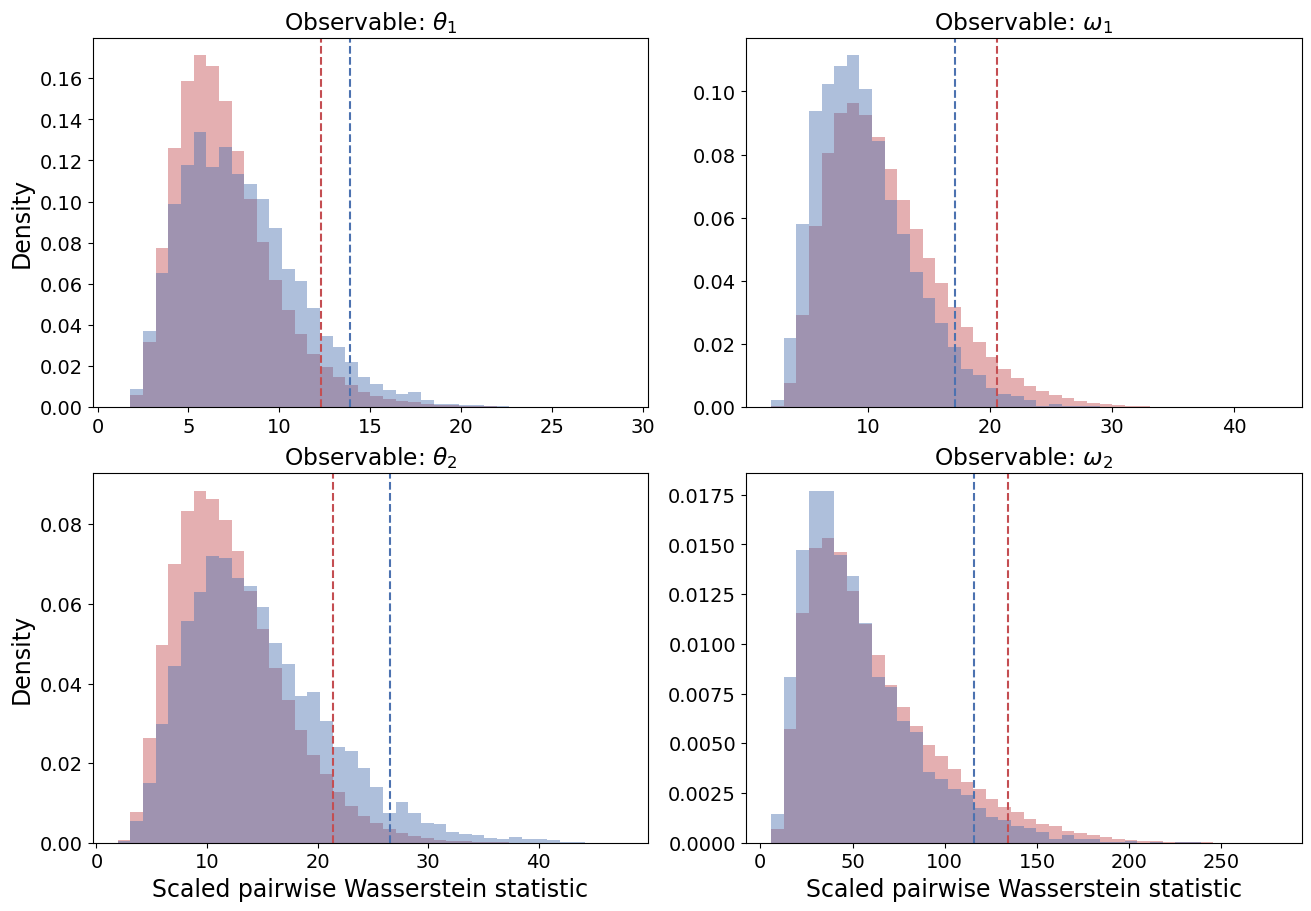}
        \caption{}
    \end{subfigure}
    \begin{subfigure}{0.49\textwidth}
        \centering
        \includegraphics[width=\textwidth]{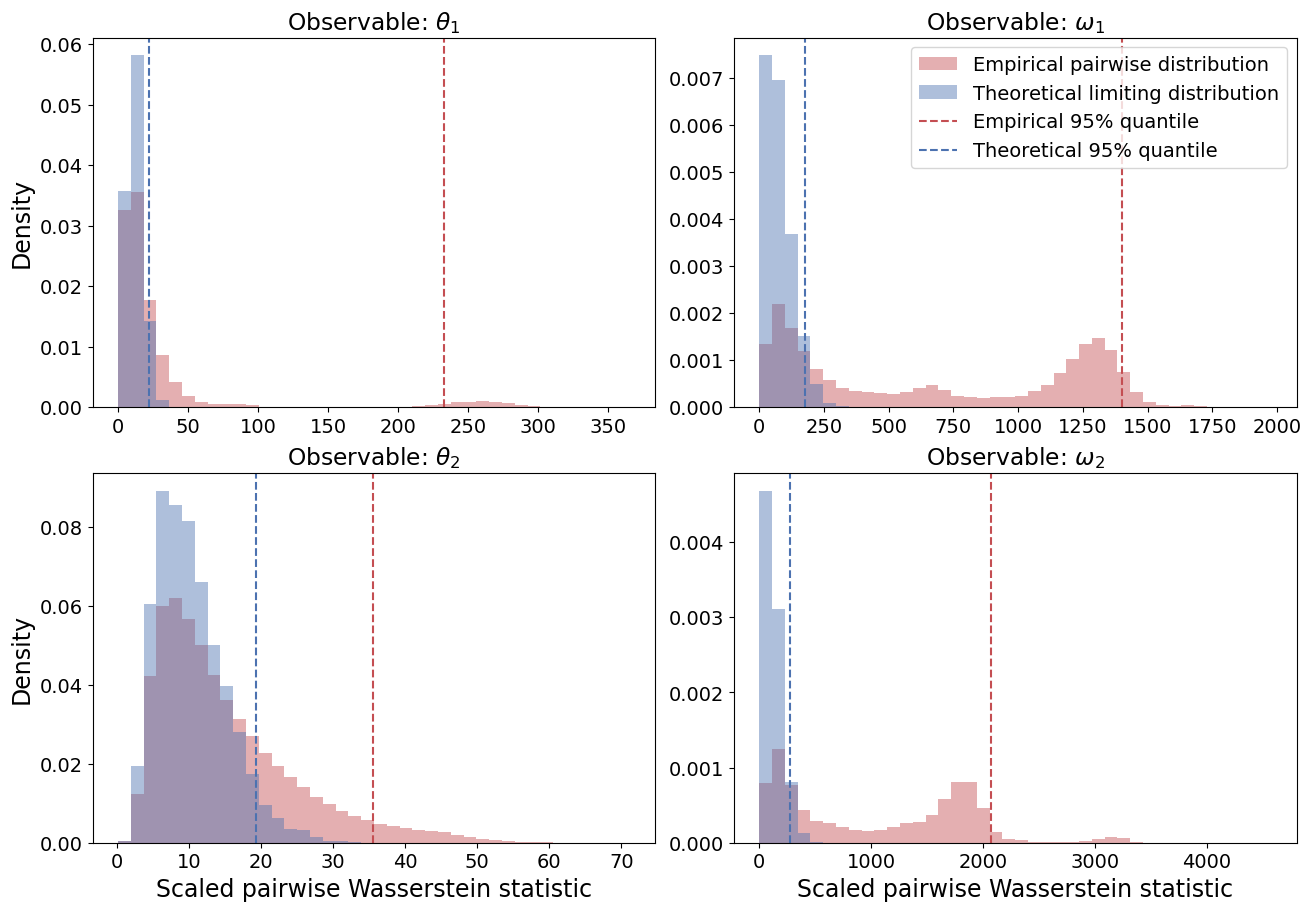}
        \caption{}
    \end{subfigure}
    \caption{Comparison of the distribution of the scaled empirical Wasserstein statistics $\sqrt{n}\,W_1\bigl(\hat\mu_n^{(i)},\hat\mu_n^{(j)}\bigr)$ and the theoretical limiting distribution for the four observables of a double pendulum. (a) Convergent setting: trajectories from the same invariant distribution; (b) Divergent setting: trajectories from different invariant distributions.}
    \label{fig:dp_pairwise_comparison}
\end{figure}

\begin{table}[ht]
\centering
\begin{minipage}[t]{0.47\textwidth}
\centering
\[
\text{\bf Convergent Case}
\]
\begin{tabular}{cccc}
\hline
\textbf{Obs.} & \boldmath$\alpha=0.01$ & \boldmath$\alpha=0.05$ & \boldmath$\alpha=0.10$ \\
\hline
$\theta_1$ & 0.5069\% & 2.4450\% & 5.0290\% \\
$\omega_1$ & 4.0110\% & 12.5642\% & 20.2120\% \\
$\theta_2$ & 0.0841\% & 1.0064\% & 2.9375\% \\
$\omega_2$ & 2.4865\% & 9.1481\% & 15.6985\% \\
\hline
\end{tabular}
\end{minipage}
\hfill
\begin{minipage}[t]{0.47\textwidth}
\centering
\[
\text{\bf Divergent Case}
\]
\begin{tabular}{cccc}
\hline
\textbf{Obs.} & \boldmath$\alpha=0.01$ & \boldmath$\alpha=0.05$ & \boldmath$\alpha=0.10$ \\
\hline
$\theta_1$ & 21.1313\% & 29.2358\% & 34.3369\% \\
$\omega_1$ & 65.2709\% & 70.5688\% & 73.9201\% \\
$\theta_2$ & 14.5251\% & 26.2192\% & 32.9009\% \\
$\omega_2$ & 66.4917\% & 72.1231\% & 76.1049\% \\
\hline
\end{tabular}
\end{minipage}
\caption{Empirical rejection rates (in \%) by observable and significance level, reported separately for convergent and divergent double pendulum ensembles.}
\label{tab:dp_rejection_comparison}
\end{table}

\section{Conclusion}
We introduced Wasserstein-based hypothesis tests for empirical-measure convergence in dependent sequences, covering both the one-sample setting with a known invariant measure and the practically important setting in which the invariant measure is unknown. Our main methodological contribution is the pairwise test based on $\sqrt{n}\,W_1\bigl(\hat\mu_n^{(i)},\hat\mu_n^{(j)}\bigr)$, together with asymptotic guarantees for coverage and power.

Across experiments, the asymptotic approximation is strongest when the long-run covariance structure is known explicitly, yielding good estimation and clear separation under alternatives. When the covariance must be estimated, performance remains useful, but finite-sample distortions are more visible. Our theoretical results in this regime are deliberately conservative: the finite-grid plug-in analysis establishes consistency of the covariance estimator and of the associated oracle grid-based Gaussian critical values, but it does not by itself prove asymptotic validity for the exact continuous $W_1$ statistic. Bridging that remaining gap requires a grid-refinement argument with $m=m_n\to\infty$ together with assumptions controlling the discretization error. This points to a clear practical direction: applications of the proposed tests should prioritize accurate long-run covariance estimation, while future work should couple that step with a continuous-limit refinement theory.

\section*{Acknowledgements}
    This research was conducted with the support of the Swiss National Science Foundation and the Bulgarian Ministry of Education and Science under Grant Agreement No. IZPYZ0\textunderscore228861 for the project ``Automatic maximally rigorous uncertainty quantification to enable design for certification'' (unCertify).

    The work is also part of the GATE project, funded under the Horizon 2020 WIDESPREAD-2018-2020 TEAMING Phase 2 program, grant agreement no. 857155 and the program ``Research, Innovation and Digitalization for Smart Transformation'' 2021-2027 (PRIDST) under grant agreement no. BG16RFPR002-1.014-0010-C01.

    We thank Prof.~Daniel Lathrop for his valuable comments on ergodicity and single-measure convergence in the double pendulum simulation.

\bibliographystyle{plain} 
\bibliography{bibliography} 

\newpage

\section*{Appendix A. Double Pendulum Simulation Initial Conditions Sampling Algorithm}
As discussed in \cite{arovas:2013}, it is appropriate to talk about ergodicity only for ensembles of Hamiltonian (i.e., measure-preserving) systems of the same total energy. Hence, given an initial energy $E$, we want to generate a set of initial conditions $(\theta_1, \omega_1, \theta_2, \omega_2)$. Below, we present the algorithm used for that. Note, that since we do not know the actual ensemble distribution of the phase space, we do not have any reason to put more emphasis on some regions of the space than others, hence, we use a uniform distribution for sampling. 

\noindent\textbf{Algorithm for Sampling Initial Conditions from a Constant Energy Surface}
\begin{enumerate}
    \item \textbf{Input:} The total energy $E$ and the system parameters, $m_1, m_2, l_1, l_2$, and $M = m_1 + m_2$.  
    
    \item \textbf{Random Angles:} Sample $\theta_1, \theta_2 \sim \operatorname{Unif} [-\pi, \pi]$ or $\theta_1, \theta_2 \sim \operatorname{Unif} [0, 2\pi]$

    \item \textbf{Potential Energy Check:} Calculate $$V(\theta_1, \theta_2) = -M g l_1 \cos(\theta_1) - m_2 g l_2 \cos(\theta_2)$$
    If $V(\theta_1, \theta_2) > E$, the configuration is physically impossible (requires negative kinetic energy), so discard the sample.
    
    \item \textbf{Determine Kinetic Energy Bounds:} To ensure a real solution for $\omega_2$, impose a maximum bound on $\omega_1$:
    $$(\omega_1)_{\mathrm{max}} = \sqrt{ \frac{2 (E-V)}{l_1^2 (m_1 + m_2 \sin^2(\theta_1 - \theta_2))} }$$
    
    \item \textbf{Sample $\omega_1$:} Sample $\omega_1 \sim \operatorname{Unif} [-(\omega_1)_{\mathrm{max}}, +(\omega_1)_{\mathrm{max}}]$
    
    \item \textbf{Calculate Coefficients:} Compute the coefficients for the quadratic equation using the sampled $\omega_1$:
    \begin{align*}
        A &= m_2 l_2^2 \\
        B &= 2 m_2 l_1 l_2 \omega_1 \cos(\theta_1 - \theta_2) \\
        C &= M l_1^2 \omega_1^2 - 2 (E-V)
    \end{align*}
    
    \item \textbf{Solve for $\omega_2$:} Calculate the two possible solutions for the second angular velocity using the quadratic formula:
    $$(\omega_2)_{\pm} = \frac{-B \pm \sqrt{B^2 - 4AC}}{2A}$$
    Randomly select either the $(+)$ or $(-)$ solution to ensure phase space is sampled evenly.
    
    \item \textbf{Return an Initial State:} The initial condition vector is $[\theta_1, \omega_1, \theta_2, \omega_2]$.
\end{enumerate}

\end{document}